\numberwithin{equation}{section}
\newtheorem{theorem}{Theorem}[section]
\newcommand{\ket}[1]{|#1\rangle}
\newcommand{\bra}[1]{\langle#1|}
\newcommand{\Tr}{\text{Tr}}
\newcommand{\x}{\bm{x}}
\newcommand{\dd}{\bm{d}}
\newcommand{\e}{\bm{e}}
\newcommand{\s}{\bm{s}}
\newcommand{\y}{\bm{y}}
\newcommand{\ii}{\text{i}}
\newcommand{\R}{\mathbb R}
\DeclareRobustCommand{\orcidicon}{\hspace{-1.0mm}
	\begin{tikzpicture}
		\draw[lime, fill=lime] (0.0,0.0) 
		circle [radius=0.15] 
		node[white] {{\fontfamily{qag}\selectfont \tiny \,ID}};
		\draw[white, fill=white] (-0.0525,0.095) 
		circle [radius=0.007];
	\end{tikzpicture}
	\hspace{-3.0mm}
}
\newcommand{\orcidDM}{\href{https://orcid.org/0000-0002-0166-4760}{\orcidicon}}
\newcommand{\orcidVAM}{\href{https://orcid.org/0000-0002-9015-1234}{\orcidicon}}
\DeclarePairedDelimiterX\braket[2]{\langle}{\rangle}{#1\,\delimsize\vert\,\mathopen{}#2}
\begin{document}
\date{\small\textsl{\today}}
\title{Numerical solution of quantum Landau–Lifshitz–Gilbert equation}
\author{
{\large Vahid Azimi-Mousolou\orcidVAM}\thanks{Email address: vahid.azimi-mousolou@physics.uu.se} \\ 
            Department of Physics and Astronomy, Divition of Material Theory, \\ Uppsala University, Box 516, 
            SE-751 20 Uppsala, Sweden \\ \\
{\large Davoud Mirzaei\orcidDM}\thanks{Emial address: davoud.mirzaei@it.uu.se} \\
            Department of Information Technology, Division of Scientific Computing, \\ Uppsala University, 
            Box 337, SE-751 05 Uppsala, Sweden
 }

\begingroup
\renewcommand\thefootnote{\#}
\footnotetext{The authors contributed equally to this work.}
\endgroup
\maketitle
%-------------------------------------------------------------
\begin{abstract}
The classical Landau–Lifshitz–Gilbert (LLG) equation has long served as a cornerstone for modeling magnetization dynamics in magnetic systems, yet its classical nature limits its applicability to inherently quantum phenomena such as entanglement and nonlocal correlations. Inspired by the need to incorporate quantum effects into spin dynamics, recently a quantum generalization of the LLG equation is proposed [Phys. Rev. Lett. 133, 266704 (2024)] which captures essential quantum behavior in many-body systems. In this work, we develop a robust numerical methodology tailored to this quantum LLG framework that not only handles the complexity of quantum many-body systems but also preserves the intrinsic mathematical structures and physical properties dictated by the equation. We apply the proposed method to a class of many-body quantum spin systems, which host topological states of matter, and demonstrate rich quantum behavior, including the emergence of long-time entangled states. This approach opens a pathway toward reliable simulations of quantum magnetism beyond classical approximations, potentially leading to new discoveries. 
 \\\\
\textbf{{Keywords}}:  Quantum Landau-Lifshitz-Gilbert equation, many-body quantum system, quantum correlation and entanglement, Runge-Kutta methods.  \\\\
\textbf{{Mathematics Subject Classification (2020)}}: 65Lxx, 81Qxx.
\end{abstract}

%---------------------------------------------
\section{Introduction}\label{sect:intro}

The classical Landau-Lifshitz (LL) equation \cite{landau35} and its extension, the Landau-Lifshitz-Gilbert (LLG) equation \cite{gilbert04}, are fundamental tools in atomistic spin dynamics and provide a powerful framework for investigating the microscopic behavior of magnetic systems and materials. These equations, which are mathematically equivalent in the classical regime \cite{Lakshmanan1984, liu24}, have been pivotal in advancing our understanding of diverse magnetic phenomena. Notable applications include the magnetization dynamics of topological structures \cite{pereiro14}, ultrafast demagnetization in face-centered cubic (fcc) 
Ni during pump-probe experiments \cite{evans15, pankratova22}, magnetization reversal in ferrimagnetic FeGd alloys \cite{chimata15}, and the quantitative characterization of spin texture complexity \cite{bagrov20}.

Despite their extensive contributions, classical approaches such as the LL and LLG equations exhibit inherent limitations. Most notably, the Bohr–van Leeuwen theorem, established over a century ago, demonstrates that classical statistical mechanics cannot account for magnetic ordering and shows that magnetism is fundamentally a quantum mechanical phenomenon. A common justification for employing classical equations has been the assumption that quantum descriptions converge to classical behavior in the limit of large angular momentum. However, in real materials, atomic angular momenta are finite, typically ranging from 0.5 to 2.5 in transition metals, and up to 3.5 in certain rare-earth elements. More critically, essential quantum features such as many-body quantum correlations and entanglement, which are central to the physics of quantum systems, including magnetic domains \cite{Laurell2021, Mathew2020, Menon2023, mousolou20, Scheie2021, Scheie2024}, are entirely beyond the scope of classical equations. To achieve a deeper and more accurate understanding of magnetic dynamics, it is therefore essential to investigate quantum mechanical analogs of the LL and LLG equations.

Wieser \cite{wieser13} proposed a quantum analog of the Landau–Lifshitz (LL) equation based on a phenomenological open-system approach, to derive the classical LL equation and its dynamics from a more fundamental quantum perspective. Expanding on this idea, in a recent study \cite{liu24}, a quantum analog of the classical LLG equation has been demonstrated. The new model provides several nontrivial quantum effects in systems consisting of spin-$\frac{1}{2}$ particle pairs such as spinless local states emerging in antiferromagnetically coupled particles and the formation of highly entangled states in the long-time limit, despite the presence of dissipation. These findings lay the groundwork for a new paradigm in exploring the quantum mechanical aspects of spin dynamics in magnetic materials. All results presented in \cite{liu24} are restricted to single- and two-spin systems, whereas quantum effects tend to be more pronounced and complex in multi-spin systems. Although exact solutions are available in certain special cases, a robust numerical solver capable of addressing the model in its general form, particularly for multi-spin systems, is essential for advancing both theoretical understanding and practical applications.

When it comes to numerical simulation, some computational tools and software packages exist for solving the classical LL and LLG  equations, such as the UppASD code \cite{eriksson17, skubic08}, which can handle large-scale spin systems. These classical methods support parameter calculations for general spin Hamiltonians and incorporate interactions such as Heisenberg exchange, symmetric and antisymmetric anisotropic exchange, and magnetic anisotropy constants. However, no established computational framework currently exists for solving the quantum analogs of the LL and LLG equations, which are essential for exploring quantum spin dynamics. A major challenge arises from the fact that quantum equations scale exponentially in size and complexity compared to their classical counterparts and require numerical solvers that preserve essential structural properties of the solution.

In this work, we establish a numerical methodology for solving quantum LLG equations in multi-body spin systems and apply this approach to investigate entanglement dynamics and emergent quantum behavior in quantum magnetic materials. The method is specifically designed to overcome the complexity of quantum multi-body dynamics while preserving key physical and mathematical structures of the quantum LLG equation. We demonstrate the effectiveness of the method by simulating quantum spin systems that host topological spin textures, including skyrmions, and reveal the formation of long-lived entangled states. These results provide a concrete step toward 
scalable, structure-preserving simulations of quantum spin dynamics beyond classical approximations.

\section{Quantum Landau-Lifshitz-Gilbert (q-LLG) equation}

While the classical LL and LLG equations describe magnetization as a classical vector, the underlying physical degrees of freedom originate from quantum spins density operators \cite{liu24, wieser13}. 
The quantum version of the LLG equation (q-LLG) has been recently demonstrated by \cite{liu24} 
\begin{equation}\label{eq:q-LLG}
\dot{\rho} = \frac{\text{i}}{\hslash}[\rho, H] + \text{i}{\kappa} [\rho,\dot{\rho}]
\end{equation}
where $\rho\in \mathbb C^{N\times N}$ (a hermitian, positive semi-definite complex matrix of size $N\times N$ with unit trace) is the density operator, 
$\dot{\rho} = \frac{d \rho }{dt}$, 
$\hslash$ is the Planck constant, and the commutator, $[A,B]$, of two operators $A$ and $B$, is defined to be $[A,B]=AB-BA$.
The second term on the right-hand side of \eqref{eq:q-LLG} has a dissipative (damping-like) character, with $\kappa$ representing the dimensionless damping rate.

The q-LLG equation governs the dynamics of the density operator $\rho$, which describes the quantum state of given spin system. The dynamics of a simple single spin-$\frac{1}{2}$ system and a dimer system composed of two spin-$\frac{1}{2}$ particles have already been discussed in \cite{liu24}. In the special case of a single spin, the q-LLG \cite{wieser13} and q-LL \cite{liu24} equations are equivalent up to a time rescaling. This mirrors the relationship between their classical counterparts. However, in multi-spin systems, the q-LLG equation exhibits significant differences from  the q-LL equation \cite{liu24}. 

The exact solution to equation \eqref{eq:q-LLG} presented in \cite{liu24} is limited to simple cases with specific initial conditions. We also note that solving this equation becomes significantly more challenging in many-body systems due to the exponential growth in dimensionality. For instance, in the lowest-dimensional spin-$\frac{1}{2}$ (qubit) system, where $N = 2^{n}$ with $n$ denoting the number of spin-$\frac{1}{2}$ particles, the density matrix resides in a space of dimension $4^{n}-1$. This exponential scaling, combined with the inherent nonlinearity of the equation, poses substantial computational challenges for simulating large quantum systems.

This paper aims to develop a robust numerical procedure for solving equation \eqref{eq:q-LLG}, as no such method currently exists in the literature.
The solution of equation \eqref{eq:q-LLG} satisfies certain properties that are particularly valuable for developing a reliable numerical method. One key property is that the density operator $\rho(t)$ is {\em Hermitian} and {\em positive semi-definite} for all $t\geq 0$. More generally,  
this variable possesses 
the conservation of the spectrum, i.e., 
\begin{equation}\label{eq:spectrum_conservation}
\frac{d}{dt} \lambda_k(\rho(t)) = 0
\end{equation}
where $\lambda_k\equiv \lambda_k(\rho(t))$, $k=1,\ldots,N$ are eigenvalues of $\rho(t)$. This means that the spectrum  is independent of time; $\lambda_k(\rho(t)) = \lambda_k(\rho(0))$ for all $t\geq 0$. 
As a consequence, the traces of all powers of $\rho$ are conserved;   
\begin{equation*}\label{eq:trace_conservation}
\frac{d}{dt} \text{Tr}(\rho^m) = 0, \quad m=1,2,\ldots  . 
\end{equation*}
Note that, cases $m = 1$ and $m = 2$ are sometimes refered to as conservation of trace and conservation of purity, respectively. See the supplementary draft of \cite{liu24} for proofs. 

To guarantee the accuracy and physical consistency of the numerical scheme, it is essential that the proposed method preserves these conservation properties.
Below we develop a numerical ODE solver for \eqref{eq:q-LLG} which makes it possible to solve the equation for an arbitrary given Hamiltonian $H$ and initial state $\rho_0 = \rho(t=0)$, while preserving the structural and physical properties of the q-LLG equation. 

\section{Converting to a standard form}

In this section, we describe how the q-LLG equation $ \eqref{eq:q-LLG} $ can be reformulated in an equivalent form to be solvable using standard ODE solvers. First, we rewrite equation \eqref{eq:q-LLG}  as   
\begin{equation}\label{eq:q-LLG-2}  
   (I - \ii \kappa \rho) \dot \rho  + \dot \rho(\ii \kappa \rho) = \frac{\ii}{\hslash} [\rho,H].  
\end{equation}  
On the other hand, the spectral decomposition (eigendecomposition) of Hermitian matrix $ \rho $ is given by  
$$  
\rho = V\Lambda V^*  
$$  
where  
$$  
\Lambda  = \begin{bmatrix}  
\lambda_1 &          &       &0\\  
          &\lambda_2 &       &\\  
          &          & \ddots &\\  
  0        &          &        & \lambda_N  
\end{bmatrix}\in \mathbb R^{N\times N}, \quad V = \begin{bmatrix}  
\bm{v}_1 \; \bm{v}_2  \; \cdots  \; \bm{v}_N  
\end{bmatrix} \in \mathbb C^{N\times N}  
$$  
are the eigenvalues and eigenvectors of $ \rho $. Substituting this decomposition into $ \eqref{eq:q-LLG-2} $, we obtain  
\begin{equation*}  
    (I+S) V^* \dot \rho\, V - V^*\dot \rho\, V S = \frac{\ii}{\hslash}  V^* [\rho,H] V  
\end{equation*}  
for $ S = - \ii \kappa \Lambda $, i.e.,  
$ s_\ell = -\ii\kappa \lambda_\ell $ for $ \ell=1,2,\ldots,N$.  
Next, we define  
\begin{equation}\label{eq:Xdef}
    X := V^*\dot \rho\, V  
\end{equation}    
which leads to  
\begin{equation}\label{eq:q-LLG-3}  
    (I+S) X - X S = D  
\end{equation}  
where the right-hand side is given by $ D = \frac{\ii}{\hslash}  V^* [\rho,H] V $.  
Equation $ \eqref{eq:q-LLG-3} $ is a {\em Sylvester equation} with diagonal matrices $ S $ and $ I-S $. 
Since $ \rho $ is unknown, all matrices in this system are also unknown. However, let us explore how we can solve this equation for $X$ using a linear algebra technique.

Assuming $ X $ is expressed in terms of its columns as  
$ X = [\x_1 \; \x_2 \; \cdots \; \x_N] $  
and similarly, $ D $ in terms of its columns as  
$ D = [\dd_1 \; \dd_2 \; \cdots \; \dd_N] $,  
we can rewrite equation $ \eqref{eq:q-LLG-3} $ as  
\begin{equation*}  
    \big[(I+S) \x_1 \; (I+S)\x_2 \; \cdots \; (I+S)\x_N\big] - \big[s_1 \x_1 \; s_2\x_2 \; \cdots \; s_N \x_N\big] = \big[\dd_1 \; \dd_2 \; \cdots \; \dd_N\big]  
\end{equation*}  
which corresponds to a set of $ N $ decoupled diagonal systems  
\begin{equation*}  
    (I+S - s_\ell I) \x_\ell = \dd_\ell, \quad \ell = 1,2,\ldots,N  
\end{equation*}  
with solutions  
\begin{equation}\label{eq:x_ell_sol}  
    \x_\ell = \text{cwd}(\dd_\ell, \e + \s - s_\ell \e)  , \quad \ell = 1,2,\ldots,N 
\end{equation}  
where $ \s $ is the diagonal of $ S $, and $ \e = [1,1,\ldots,1]^T $.  
Here, $ \text{cwd}(\x,\y) $ denotes the component-wise division of vectors $ \x $ and $ \y $. The solution is well-defined because the denominators $ \e + \s - s_\ell \e $ are nonzero unless $ s_\ell - s_j = 1 $ for some $ \ell $ and $ j $. However, this never occurs because $ s_\ell-s_j = -\ii\kappa (\lambda_\ell-\lambda_j)$, and $\kappa, \lambda_j,\lambda_\ell\in \mathbb R$. 

We remark that equation \eqref{eq:x_ell_sol} has benefited from the ODE \eqref{eq:q-LLG} to express $X$ in terms of $\rho$ only (via $V$ and $\Lambda$). Now, we can write from \eqref{eq:Xdef}, 
\begin{equation}\label{eq:q-LLG-final}  
    \begin{split}  
    &\dot \rho = VXV^*  \\  
    & \rho(0) = \rho_0  
    \end{split}  
\end{equation}  
where $X$ is obtained from \eqref{eq:x_ell_sol}. 
Note that the right-hand side $ VXV^* $  is now independent of $ \dot \rho $ and depends only on $ \rho $ via the eigendecomposition and equation \eqref{eq:x_ell_sol}. 

The reformulated system in \eqref{eq:q-LLG-final} defines an initial value problem suitable for standard ODE solvers. However, most built-in solvers are designed to handle vector-valued systems, whereas in this case, the solution $\rho$ is matrix-valued. To apply such solvers, one common approach is to flatten $\rho$ into a long vector at each time step, solve the corresponding vector-valued ODE, and then reshape the result back into matrix form. 

In this work, however, we do not use built-in solvers. Instead, we employ a class of Runge-Kutta methods directly in the matrix formulation and enhance them to ensure that the structural properties of the solution such as symmetry, non-negativity, and traces are preserved throughout the integration process.

\section{Explicit Runge-Kutta methods}

Let's get started with the Euler's method, the simplest explicit ODE solver which begins with the initial density matrix $\rho_0$ and computes an approximate solution at each time level $t_{k+1} = (k+1)h$ using the recurrence
\begin{equation*}\label{euler_method}
\rho_{k+1} = \rho_k + h \, V_k X_k V_k^*, \quad k = 0, 1, 2, \ldots,
\end{equation*}
where $h=t_{k+1}-t_k$ denotes the time step, $V_k$ is the eigenmatrix of $\rho_k$, and $X_k$ is obtained from equation \eqref{eq:x_ell_sol}, with $S$ and $D$ derived from $\rho_k$. The iteration continues until the final time $t_F$ is reached. 
The Euler method  can be written as
$$
\rho_{k+1} = \rho_k + h f(\rho_k), \quad k = 0, 1, 2, \ldots,
$$
where $f(\rho_k) = V_k X_k V_k^*$. This method gives a first-order approximation provided that the solution $\rho$ is sufficiently smooth.
This basic scheme can be extended to a more general explicit method
$$
\rho_{k+1} = \rho_k + h \psi(\rho_k; h), \quad k = 0, 1, 2, \ldots,
$$
where $\psi(\rho; h)$ is a function that replaces $f(\rho)$ and is assumed to be (at least) continuous in $h$ and Lipschitz continuous in $\rho$. Imposing additional regularity and consistency conditions on $\psi$ enables the construction of higher-order schemes.

A prominent family of such high-order methods is the class of explicit RK methods. An explicit $s$-stage RK method has the form
\begin{equation}\label{rk:sstageform}
\begin{array}{rl}
\displaystyle z_1 &=\, \rho_k\\
\displaystyle z_2 &=\, \rho_k + h a_{2,1} f(z_1) \\
\displaystyle z_3 &=\, \rho_k + h \left[ a_{3,1} f(z_1) + a_{3,2} f(z_2) \right] \\
&\vdots \\
\displaystyle z_s &=\, \rho_k + h \left[ a_{s,1} f(z_1) + \cdots + a_{s,s-1} f(z_{s-1}) \right] \\  \hline 
\displaystyle \rho_{k+1} &=\, \rho_k + h \left[ b_1 f(z_1) + b_2 f(z_2) + \cdots + b_s f(z_s) \right]
\end{array}
\end{equation}
The method is fully defined by the real coefficients $\{a_{\ell,j}, b_j\}$, which are usually presented in a compact form called the {\em Butcher tableau} (see, e.g., \cite{Butcher:2016-1}). At minimum, these coefficients must satisfy
$$
\sum_{j=1}^s b_j = 1.
$$
The Euler method is simply a one-stage RK method and provides first-order accuracy, i.e., the error decays as $\mathcal{O}(h^1)$. Among higher-order methods, the most widely used is the classical 4-stage RK method, which gives fourth-order accuracy, i.e., $\mathcal{O}(h^4)$.

However, all explicit RK methods are conditionally stable, meaning that the time step $h$ must be sufficiently small to prevent numerical instability and blow-up. Although implicit methods allow for larger time steps and improved stability properties, their application to the matrix-valued evolution equations considered here would be computationally expensive due to the complex structure of the right-hand side. Therefore, we restrict ourselves to explicit methods with suitably small step sizes.

In what follows, we establish a couple of structural properties of the solution that are preserved by the explicit RK methods applied to this class of equations.

\begin{theorem}
    The explicit RK schemes  \eqref{rk:sstageform} preserve the Hermiticity of the density matrices $\rho_k$ for $k\geq 0$.
\end{theorem}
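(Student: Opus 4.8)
The plan is to exploit the fact that the update $\rho_{k+1} = \rho_k + h\sum_{j} b_j f(z_j)$ and each internal stage $z_\ell = \rho_k + h\sum_{j<\ell} a_{\ell,j} f(z_j)$ are affine combinations of $\rho_k$ and the stage evaluations $f(z_j)$ with \emph{real} coefficients $a_{\ell,j}, b_j$ and real step $h$. Since the Hermitian matrices form a real-linear subspace of $\mathbb{C}^{N\times N}$ (closed under real linear combinations), the whole scheme will preserve Hermiticity by a short induction \emph{provided} I can establish the single key lemma that the right-hand side map $f(z) = VXV^*$ sends Hermitian matrices to Hermitian matrices.

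To prove this lemma, I would unwind the definition of $f$ for a Hermitian argument $z$. Writing the eigendecomposition $z = V\Lambda V^*$ with $V$ unitary and $\Lambda$ real diagonal, the assumed Hermiticity of the Hamiltonian $H$ gives $[z,H]^* = H^*z^* - z^*H^* = Hz - zH = -[z,H]$, so the commutator is anti-Hermitian and $\frac{\mathrm{i}}{\hslash}[z,H]$ is Hermitian. Conjugating a Hermitian matrix by the unitary $V$ preserves Hermiticity, so $D = \frac{\mathrm{i}}{\hslash}V^*[z,H]V$ is Hermitian, i.e. $D_{m\ell} = \overline{D_{\ell m}}$.

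The heart of the argument is then to show that the Sylvester solution $X$ inherits Hermiticity from $D$. From the entrywise formula \eqref{eq:x_ell_sol}, the $(m,\ell)$ entry is
\begin{equation*}
X_{m\ell} = \frac{D_{m\ell}}{1 + s_m - s_\ell} = \frac{D_{m\ell}}{1 - \mathrm{i}\kappa(\lambda_m - \lambda_\ell)},
\end{equation*}
using $s_m = -\mathrm{i}\kappa\lambda_m$. Taking the conjugate of the transposed entry and using that $D$ is Hermitian together with $\kappa,\lambda_m,\lambda_\ell \in \mathbb{R}$ yields
\begin{equation*}
\overline{X_{\ell m}} = \frac{\overline{D_{\ell m}}}{1 + \mathrm{i}\kappa(\lambda_\ell - \lambda_m)} = \frac{D_{m\ell}}{1 - \mathrm{i}\kappa(\lambda_m - \lambda_\ell)} = X_{m\ell},
\end{equation*}
so $X^* = X$ and hence $f(z) = VXV^*$ is Hermitian, completing the lemma. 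The induction then proceeds stage by stage: $z_1 = \rho_k$ is Hermitian by hypothesis, each $z_\ell$ is a real combination of Hermitian matrices and thus Hermitian, and finally $\rho_{k+1}$ is Hermitian as well, closing the outer induction on $k$.

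I expect the main obstacle to be precisely the entrywise denominator bookkeeping in the step above: the argument hinges on the observation that the denominator attached to $X_{m\ell}$ is the complex conjugate of the denominator attached to $X_{\ell m}$, which is exactly what the reality of $\kappa$ and of the spectrum guarantees (the same fact that rules out vanishing denominators in the excerpt). A secondary point I would address for rigor is that the eigendecomposition $V$ is not unique when $z$ has repeated eigenvalues; however, the Hermiticity conclusion holds for \emph{any} unitary diagonalizing $V$, so no canonical choice is needed and the argument is unaffected.
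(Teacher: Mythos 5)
Your proof is correct and takes essentially the same route as the paper: induction over RK steps and stages, reducing everything to the key lemma that $f(z)=VXV^*$ maps Hermitian matrices to Hermitian matrices, which follows because $[z,H]$ is skew-Hermitian, hence $D=\frac{\mathrm{i}}{\hslash}V^*[z,H]V$ is Hermitian, and the Sylvester solution $X$ inherits Hermiticity. Your explicit entrywise check that the denominators $1-\mathrm{i}\kappa(\lambda_m-\lambda_\ell)$ conjugate into one another actually supplies the detail that the paper only asserts (``the equation has a unique Hermitian solution''), so your write-up is, if anything, the more complete of the two.
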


\begin{proof}
We show that if $\rho_k$ is Hermitian, then the next iterate $\rho_{k+1}$ computed by \eqref{rk:sstageform} remains Hermitian. We prove that $f(z_j)$ is Hermitian for all $j=1,\ldots,s$. It suffices to show that $X$ matrices at each stage are Hermitian. For $j=1$
recall that $X$ is computed as the solution of equation \eqref{eq:q-LLG-3}. Since both $\rho_k$ and $H$ are Hermitian matrices, their commutator $[\rho_k, H]$ is skew-Hermitian. Therefore,
$$
D_k = \frac{\ii}{\hslash} V_k^*[\rho_k, H]V_k
$$
is Hermitian. Given that $D_k$ is Hermitian, the equation \eqref{eq:q-LLG-3} has a unique Hermitian solution $X_k$. This shows that $z_2$ is Hermitian. Similarly, for $j=\ell>1$ we can show that $z_j$ and $f(z_j)$ are all Hermitian.  

\end{proof}

\begin{theorem}
    The explicit RK schemes  \eqref{rk:sstageform} preserve the trace of $\rho_k$ for $k\geq 0$. 
\end{theorem}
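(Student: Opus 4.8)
The plan is to reduce trace conservation across one full RK step to a trace condition on the individual stage increments. Writing the update as $\rho_{k+1} = \rho_k + h\sum_{j=1}^s b_j f(z_j)$ and using linearity of the trace, we get $\text{Tr}(\rho_{k+1}) = \text{Tr}(\rho_k) + h\sum_{j=1}^s b_j\,\text{Tr}(f(z_j))$. Hence it suffices to show that $\text{Tr}(f(z_j)) = 0$ for every stage $j=1,\ldots,s$; once this is established, $\text{Tr}(\rho_{k+1}) = \text{Tr}(\rho_k)$, and induction on $k$ gives $\text{Tr}(\rho_k) = \text{Tr}(\rho_0)$ for all $k\geq 0$.

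First I would fix a stage with Hermitian argument $z$ (each $z_j$ is Hermitian by the previous theorem) and recall that $f(z) = V X V^*$, where $V$ is the unitary eigenmatrix of $z$ and $X$ solves the Sylvester equation \eqref{eq:q-LLG-3}. By the cyclic property of the trace and unitarity of $V$, $\text{Tr}(f(z)) = \text{Tr}(V X V^*) = \text{Tr}(X V^* V) = \text{Tr}(X)$, so the problem reduces to proving $\text{Tr}(X)=0$.

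Next I would compute $\text{Tr}(X)$ from the structure of \eqref{eq:q-LLG-3}. Taking the trace of $(I+S)X - XS = D$ and using $\text{Tr}(SX)=\text{Tr}(XS)$, the $S$-terms cancel and we are left with $\text{Tr}(X) = \text{Tr}(D)$. Equivalently, from the closed form \eqref{eq:x_ell_sol} the $\ell$-th diagonal entry of $X$ has denominator $1 + s_\ell - s_\ell = 1$, so that $X_{\ell\ell} = D_{\ell\ell}$ and again $\text{Tr}(X)=\text{Tr}(D)$. It then remains only to show that $\text{Tr}(D)=0$: since $D = \frac{\ii}{\hslash} V^*[z,H]V$, cyclicity gives $\text{Tr}(D) = \frac{\ii}{\hslash}\text{Tr}([z,H])$, and the trace of any commutator vanishes. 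Therefore $\text{Tr}(X)=0$, hence $\text{Tr}(f(z_j))=0$ for all stages.

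The argument is essentially algebraic, and I do not expect a serious obstacle; the only point requiring care is that $X$ is defined only implicitly at each stage, so the vanishing trace must be extracted from the Sylvester structure, via the cancellation of the $S$-terms, which relies on the ``$+I$'' in $I+S$ contributing exactly the diagonal of $D$, rather than from an explicit formula for $\rho$. The Hermiticity of every $z_j$ established in the preceding theorem guarantees that the construction of $D$ and $X$ is valid at each stage, so this localization to a single increment is justified.
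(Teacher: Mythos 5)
Your proposal is correct and follows essentially the same route as the paper's proof: reduce trace preservation to $\mathrm{Tr}(f(z_j))=0$ at each stage, use cyclicity of the trace to pass from $\mathrm{Tr}(VXV^*)$ to $\mathrm{Tr}(X)$, cancel the $S$-terms in the trace of the Sylvester equation \eqref{eq:q-LLG-3}, and invoke the vanishing trace of the commutator $[\,\cdot\,,H]$. Your additions (the explicit induction on $k$, the diagonal-entry check $X_{\ell\ell}=D_{\ell\ell}$, and the explicit appeal to the Hermiticity theorem to justify the unitary eigendecomposition at each stage) are useful clarifications of details the paper leaves implicit, but they do not constitute a different argument.
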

\begin{proof}
    According to \eqref{rk:sstageform} it is enough to show that 
    $\mathrm{Tr}(f(z_j)) = 0$ for all $j=1,\ldots,s$. 
    Starting with $j=1$, we show that $\mathrm{Tr}( f(z_1)) =  \mathrm{Tr}( V_k X_k V_k^*) = 0$. Using the known property $\mathrm{Tr}(AB) = \mathrm{Tr}(BA)$, we have
    $
    \mathrm{Tr}( V_k X_k V_k^*) = \mathrm{Tr}(X_k)
    $. We observe from \eqref{eq:q-LLG-3} that
    $\mathrm{Tr}(X_k) = \mathrm{Tr}(X_kS)-\mathrm{Tr}(SX_k)+ \mathrm{Tr}(D_k) = 0$ because the trace of commutator $[\rho_k,H]$ is zero. This proves that $\mathrm{Tr}(z_2)=\mathrm{Tr}(\rho_k)$. Similarly, we can show that $\mathrm{Tr}(f(z_j)) = 0$ and $\mathrm{Tr}(z_j) = \mathrm{Tr}(\rho_k)$ for $j>1$. 
\end{proof}

However, it can be shown that explicit RK methods do not preserve the trace of higher powers of $\rho_k$, i.e., $\mathrm{Tr}(\rho_k^n)$ for $n > 1$. This indicates a loss of exact conservation under RK time integration, even though $\rho(t)$, the exact solution of equation \eqref{eq:q-LLG} (or equivalently \eqref{eq:q-LLG-final}), satisfies such conserved quantities.

Another important property of the exact solution $\rho(t)$ is that its spectrum remains constant over time. In contrast, this spectral invariance is not preserved by explicit RK methods. As a result, it becomes non-trivial to prove that the numerical solution $\rho_k$ remains positive semi-definite for all $k \geq 1$.
In fact, the right-hand side terms $f(z_j)$ in the RK scheme are generally not positive semi-definite. Nevertheless, their contribution to the solution is scaled by the time step $h$, which helps to limit their impact on the eigenvalues of $\rho_k$ over short time intervals. Still, to achieve long-term stability and preserve key structural properties of the solution, it is necessary to develop a conservative numerical method.

\section{Structure-Preserving Solvers}

As previously mentioned, standard Runge-Kutta (RK) methods \eqref{rk:sstageform} fail to preserve important structural properties of the density matrix $\rho(t)$, such as non-negativity and the invariance of the trace of its higher powers. Theoretically, the spectrum of $\rho(t)$ remains constant over time, i.e. 
$$
\frac{d}{dt} \lambda_j(t) = 0, \quad j = 1,2,\ldots,N,
$$
where $\lambda_j(t) \in \mathbb{R}$ are the eigenvalues of $\rho(t)$. This spectral invariance implies other properties mentioned above. 

To guarantee that our numerical scheme respects this key spectral property, we propose a modified version of the standard explicit RK methods. The idea is to project the intermediate states of the RK procedure back onto the manifold of matrices with the same spectrum as the initial density matrix $\rho_0$.
Assume that $\rho_0$ admits the spectral decomposition
$$
\rho_0 = V_0 \Lambda_0 V_0^*,
$$
where $\Lambda_0 = \mathrm{diag}(\lambda_1, \ldots, \lambda_N)$ is a diagonal matrix of eigenvalues and $V_0$ is a unitary eigenmatrix. We now propose the following conservative RK scheme which preserves the spectrum of $\rho_0$ over the integration process:

\begin{equation}\label{rk:sstageform_modified}
\begin{array}{rl}
\displaystyle z_1 &=\, \rho_k\\
\displaystyle \tilde z_2 &=\, \rho_k + h a_{2,1} f(z_1) \\
\displaystyle z_2 & = \tilde V_2\Lambda_0 \tilde V_2^* \\
\displaystyle \tilde z_3 &=\, \rho_k + h \left[ a_{3,1} f(z_1) + a_{3,2} f(z_2) \right] \\
\displaystyle z_3 & = \tilde V_3\Lambda_0 \tilde V_3^* \\
&\vdots \\
\displaystyle \tilde z_s &=\, \rho_k + h \left[ a_{s,1} f(z_1) + \cdots + a_{s,s-1} f(z_{s-1}) \right] \\ 
\displaystyle z_s & = \tilde V_s\Lambda_0 \tilde V_s^* \\ 
\hline 
\displaystyle \tilde \rho_{k+1} &=\, \rho_k + h \left[ b_1 f(z_1) + b_2 f(z_2) + \cdots + b_s f(z_s) \right] \\
\displaystyle \rho_{k+1} & = \tilde V\Lambda_0 \tilde V^* \\
\end{array}
\end{equation}
where $\tilde V_j$ are eigenvector matrices of the intermediate solutions $\tilde z_j$ for $j = 2, \ldots, s$, and $\tilde V$ is the eigenvector matrix of $\tilde \rho_{k+1}$.

 The eigenvector matrices $\tilde V_j$ and $\tilde V$ are readily available from the computation of the right-hand side $f(\rho)$ at each stage, so their construction does not impose additional cost compared to the standard RK methods. In practice, we guarantee numerical consistency by sorting the eigenvectors according to their corresponding eigenvalues in increasing order. Furthermore, in our implementation (in MATLAB), we employ and adjust the more robust \texttt{svd} function to perform the eigendecomposition of Hermitian matrices instead of the default \texttt{eig} function.

\begin{theorem}
The modified Runge-Kutta methods defined in \eqref{rk:sstageform_modified} preserve the spectrum of the solution $\rho_k$ at each time step. Consequently, it also preserves non-negativity and the trace of all matrix powers of $\rho_k$.
\end{theorem}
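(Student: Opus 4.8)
The plan is to exploit the single structural fact that makes the modified scheme \eqref{rk:sstageform_modified} work: every projected quantity is produced as a unitary conjugation of the fixed diagonal matrix $\Lambda_0$, and unitary similarity leaves the spectrum unchanged. Concretely, I would first record the elementary observation that for any unitary $U \in \mathbb{C}^{N\times N}$ the matrix $U\Lambda_0 U^*$ is Hermitian and has exactly the eigenvalues $\lambda_1,\ldots,\lambda_N$ on the diagonal of $\Lambda_0$, with the same multiplicities, since $U\Lambda_0 U^*$ and $\Lambda_0$ are similar. This is the engine of the whole argument.

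Next I would verify that the projection steps are well defined, i.e. that each matrix being eigendecomposed is genuinely Hermitian, so that its eigenvector matrix is unitary. Proceeding by induction on the stage index, assume $\rho_k$ is Hermitian. Then $\tilde z_2 = \rho_k + h a_{2,1} f(z_1)$ is Hermitian because $f(z_1)=V_k X_k V_k^*$ is Hermitian by the Hermiticity-preservation result proved earlier, whose $X_k$ solves the Sylvester equation \eqref{eq:q-LLG-3} with Hermitian right-hand side. Hence $\tilde z_2$ admits a unitary eigenmatrix $\tilde V_2$, and the projection $z_2 = \tilde V_2 \Lambda_0 \tilde V_2^*$ is Hermitian with spectrum $\Lambda_0$ by the observation above. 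Crucially, $z_2$ then already carries its eigendecomposition $(\tilde V_2,\Lambda_0)$, so the solve \eqref{eq:x_ell_sol} producing $f(z_2)$ uses exactly the eigenvalues of $\rho_0$ and stays well defined, its denominators never vanishing as shown. Repeating this through stages $j=3,\ldots,s$ shows every $z_j$ is Hermitian with spectrum $\Lambda_0$, and therefore $\tilde\rho_{k+1}=\rho_k+h\sum_j b_j f(z_j)$ is a real linear combination of Hermitian matrices, hence Hermitian, with a unitary eigenmatrix $\tilde V$.

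The spectrum-preservation claim then follows immediately: $\rho_{k+1}=\tilde V \Lambda_0 \tilde V^*$ is a unitary conjugate of $\Lambda_0$, so $\mathrm{spec}(\rho_{k+1}) = \{\lambda_1,\ldots,\lambda_N\} = \mathrm{spec}(\rho_0)$, and an outer induction on $k$ gives $\mathrm{spec}(\rho_k)=\mathrm{spec}(\rho_0)$ for all $k\geq 0$. For the consequences I would note that positivity is inherited directly: since $\rho_0$ is positive semi-definite all $\lambda_j \geq 0$, and as $\rho_k$ carries the same eigenvalues it too is positive semi-definite for every $k$. Likewise, because the trace of any matrix power depends only on the spectrum, $\mathrm{Tr}(\rho_k^m)=\sum_{j} \lambda_j^m = \mathrm{Tr}(\rho_0^m)$ for every $m\geq 1$ and every $k$, which recovers conservation of trace and purity as the cases $m=1,2$.

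I do not expect a deep obstacle here; the content is essentially the invariance of the spectrum under unitary similarity. The one point requiring care — and where a careless argument could slip — is the well-definedness of the projections: one must confirm that each pre-projection matrix $\tilde z_j$ and $\tilde\rho_{k+1}$ is Hermitian, so that a unitary eigenmatrix exists, before invoking $z_j=\tilde V_j\Lambda_0\tilde V_j^*$, and then check that eigenvalue degeneracies do not break the argument. Degeneracy turns out to be harmless for the spectral claim, since any admissible unitary eigenmatrix yields the same spectrum irrespective of the sorting convention mentioned in the text; the ordering of eigenvectors affects only numerical continuity from step to step, not the exact conservation proved here.
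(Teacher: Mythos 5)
Your proof is correct, and it is worth noting that the paper itself states this theorem \emph{without} proof, so your argument supplies exactly the justification the construction in \eqref{rk:sstageform_modified} implicitly relies on: every projected iterate is a unitary conjugate $\tilde V \Lambda_0 \tilde V^*$ of the fixed $\Lambda_0$, hence shares its spectrum, with the Hermiticity induction (via the earlier theorem on $f$ preserving Hermiticity) guaranteeing that each pre-projection matrix admits a unitary eigenmatrix so the projections are well defined. Your closing remarks on non-negativity, $\mathrm{Tr}(\rho_k^m)=\sum_j \lambda_j^m$, and the harmlessness of eigenvalue degeneracy are all accurate and complete the statement's consequences.
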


This projection-based modification enforces structure preservation in each iteration, making it suitable for long-time simulations of quantum density matrices where physical constraints must be respected.

\section{Numerical results}
To assess the accuracy and reliability of the developed numerical method for solving the q-LLG equation \eqref{eq:q-LLG}, we apply it to a set of physical examples relevant to quantum spin dynamics. We focus on the dynamics of quantum entanglement in such spin systems through the q-LLG equation, as entanglement is a fundamental non-classical feature of quantum systems that cannot be addressed by classical approaches, including the classical LLG equation.

We consider a many-body quantum model system described by the following spin Hamiltonian 
\begin{equation}\label{eq:H1}
H = \frac{2J}{\hslash^2}\sum_{ij} \mathbf{S}_i \cdot \mathbf{S}_j + \frac{2}{\hslash^2}\sum_{ij} D_{ij} \cdot \left[\mathbf{S}_i \times \mathbf{S}_j \right] -\mu \sum_i B\cdot \mathbf{S}_i
\end{equation}
where $J\in\R$ is the isotropic Heisenberg exchange interaction, ${D}_{ij}\in \R^3$ ($D_{ij}=-D_{ji}$) are the Dzyaloshinskii–Moriya interaction (DMI), ${B}\in\R^3$ denotes an external magnetic field uniformly applied to all spins $\mathbf{S}_i$. The constant $\mu$ associated with the gyromagnetic ratio is given by $\mu = -\frac{\mu_B g}{\hslash}$, where $\mu_B$ is the Bohr magneton and $g$ is the Landé $g$-factor. The spin operators $\mathbf{S}_i$ are assumed to be spin-$\frac{1}{2}$ vector operators given by 
\begin{equation}\label{eq:spin_op}
\mathbf{S}_{i} =  [S_i^x, S_i^y, S_i^z], \quad S_i^v = I\otimes I \otimes \cdots \otimes I \otimes \underbrace{\frac{\hslash}{2} \sigma_v}_{i\text{th site}}\otimes I \otimes \cdots \otimes I  , \quad v = x,y,z
\end{equation}
for each spin at site $i$,
where $\sigma_x, \sigma_y, \sigma_z\in \mathrm{SU}(2) $ (the special unitary matrices of size 2) are the Pauli matrices. We adopt the standard computational spin-$\frac{1}{2}$ (qubit) basis $\ket{0}=\ket{\uparrow}$ and $\ket{1}=\ket{\downarrow}$, such that $\sigma_z\ket{i}=(-1)^{i}\ket{i}$ for $i=0, 1$. Throughout the paper, the following constant values $\mu_B=5.8\times 10^{-2}\ \text{meV/T}$, $\hslash=0.658\ \text{meV.ps}$, and $g=2$ are used.  

The Hamiltonian \eqref{eq:H1} represents a generalized form of the simplest two-spin (qubit) system discussed in \cite{liu24}. To extend the numerical analysis to a many-body system, the Hamiltonian is considered on a triangular lattice with nearest-neighbor interactions, as illustrated in Figure \ref{fig1}. The competition between the exchange interaction and the DMI can give rise to topological spin textures such as skyrmions, which are typically stabilized by a nonzero magnetic field \cite{Sotnikov2020}.

\begin{figure}[h]
\begin{center}
\includegraphics[width=70mm]{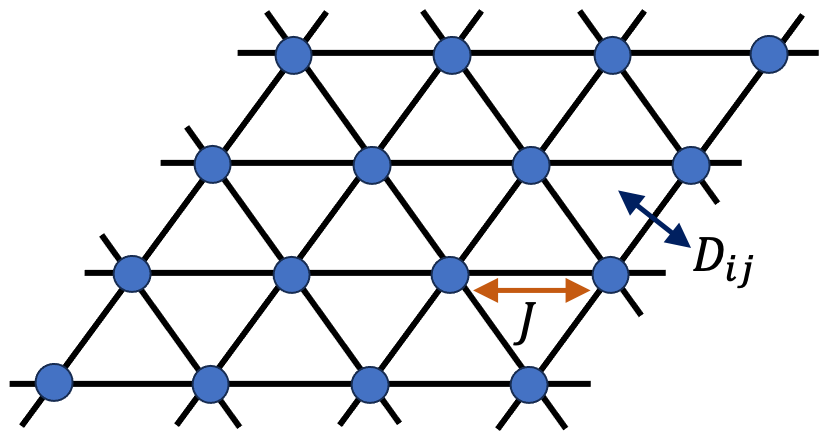}
\end{center}
\caption{Schematics of a triangular lattice with $J$ representing the Heisenberg exchange interaction, and ${D}_{ij}$ representing in-plane DMI vectors in directions perpendicular to bonds between neighboring $i$ and $j$ spin sites.}
\label{fig1}
\end{figure}

To emphasize the role of nonlinear effects in the quantum LLG equation, the dissipation parameter is set to be $\kappa = 0.5$ throughout the analysis. In the many-body case, without loss of generality, we have chosen to work with a 16-spin clusters with periodic boundary conditions. We also choose the following class of density matrices as the initial state $\rho(0)$ of the q-LLG dynamics:
\begin{equation*}
\rho(0) = \frac{p_0}{2^{n}} I + p_1 \ket{\text{AF}_{1}}\bra{\text{AF}_{1}} + p_2 \ket{\text{AF}_{2}}\bra{\text{AF}_{2}} + p_3 \ket{\text{GHZ}}\bra{\text{GHZ}} + p_4 \ket{W}\bra{W}
\end{equation*}
for a probability vector $p = (p_0, p_1, p_2, p_3, p_4)$, where the probabilities $p_i \in [0, 1]$ sum to unity. The antiferromagnetically ordered states are given by
\begin{equation*}
\ket{\text{AF}_i} = \bigotimes_{l=1}^{n} \ket{(l + i) \bmod 2}, \quad i = 1, 2,
\end{equation*}
while the generalized GHZ and $W$ states are defined as
\begin{equation*}
\ket{\text{GHZ}} = \frac{1}{\sqrt{2}}\left( \ket{0}^{\otimes n} + \ket{1}^{\otimes n} \right), \quad \quad \ket{W} = X \ket{0}^{\otimes n},
\end{equation*}
where 
%$X = \frac{1}{\s\sqrt{n}} \sum_{l=1}^{n} \sigma_x^{(l)}$, and $\sigma_x^{(l)} = I\otimes I \otimes \cdots \otimes I \otimes \underbrace{ \sigma_v}_{l\text{th site}}\otimes I \otimes \cdots \otimes I$ denoting the Pauli not gate acting on the $l$th spin (qubit), flipping its state.
$X = \frac{2}{\hslash\sqrt{n}} \sum_{i=1}^{n} S_i^x$, with $S_i^x$ defined in \eqref{eq:spin_op}.

Let us first examine the accuracy and convergence order of various standard RK methods using a pure (rank-1) state $\rho_0$ as an initial condition.  
Note that for the special case of a rank-1 initial density matrix $\rho_0$, there exists an exact solution to \eqref{eq:q-LLG}  given by \cite{liu24}
\begin{equation}\label{rho_exact}  
\rho(t) = \frac{\exp\left( -\frac{\ii}{\hslash}\tilde Ht\right)\rho_0 \exp\left(\frac{\ii}{\hslash}\tilde Ht\right)}{\mathrm{Tr}\left( \exp\left( -\frac{\ii}{\hslash}\tilde Ht\right)\rho_0 \exp\left(\frac{\ii}{\hslash}\tilde Ht\right)\right)},
\end{equation}
with $\tilde H = \left(\frac{1 - \ii\kappa}{1 + \kappa^2} \right)H$. This exact solution  
can be used for comparison with the corresponding numerical results.

Following \cite{liu24}, we begin with the simplest case of tow-spin ($n=2$) system
with Hamiltonian parameters $J = 1$ meV, $D_{21}=-D_{12}=|D|(0, 0, 1)$, $|D|=0.4$ meV, and $|B|=1$ T being the strength of magnetic field vector $B = |B|(1,0,0)$. Given the pure state initial condition, $\rho_0 =\ket{\text{AF}_{1}}\bra{\text{AF}_{1}}=\ket{01}\bra{01}$, Figure\ \ref{fig:error_cmp} presents the error plots for RK methods \eqref{rk:sstageform} and the conservative RK methods \eqref{rk:sstageform_modified}, for orders 1 through 4. The conservative RK family maintains full order accuracy across all timestep levels, whereas the errors in the classical RK3 and RK4 methods begin to deteriorate for small values of $h$. This test case involves a small quantum LLG problem with a $4 \times 4$ density matrix; for larger systems, such numerical instabilities with the non-conservative methods would likely to become even more pronounced. 

\begin{figure}[ht!]
    \centering
    \includegraphics[width=0.35\linewidth]{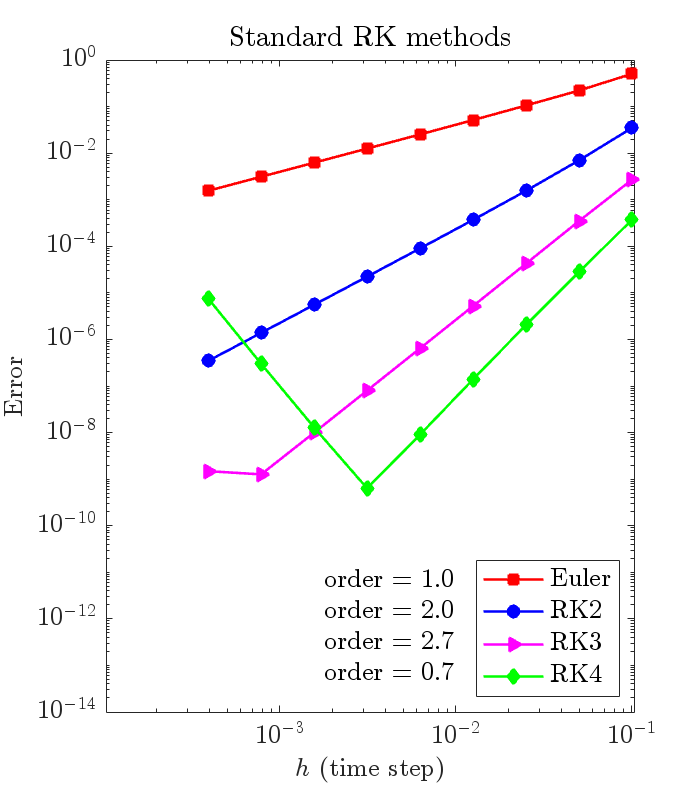}
    \includegraphics[width=0.35\linewidth]{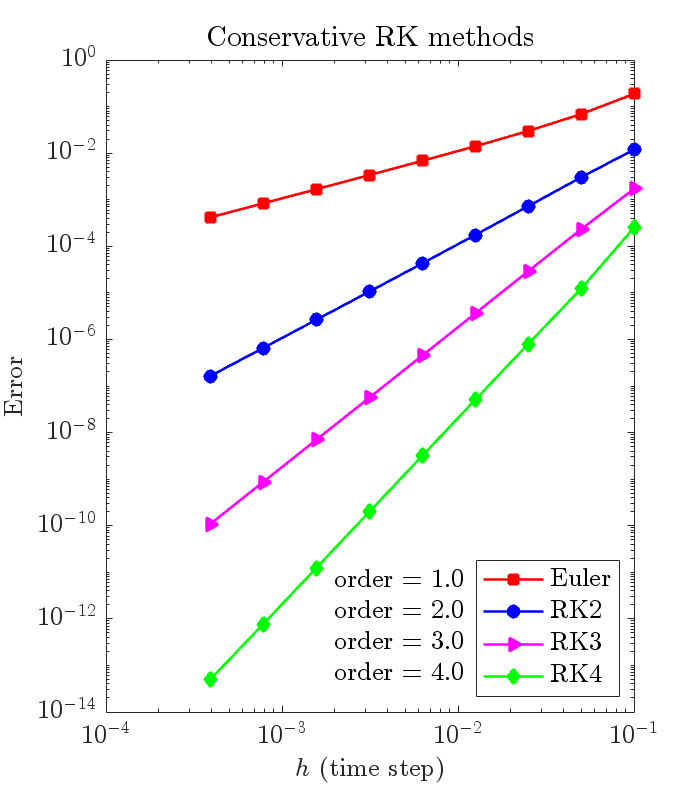}
    \caption{Error plots and convergence orders of standard RK methods (left) and the conservative RK methods (right) for the 2-spin system.}
    \label{fig:error_cmp}
\end{figure}

\begin{figure}[ht!]
    \centering
    \includegraphics[width=0.4\linewidth]{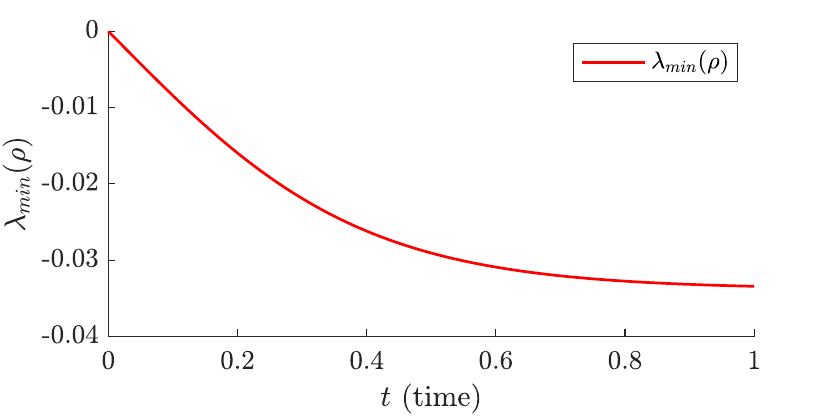}
    \includegraphics[width=0.4\linewidth]{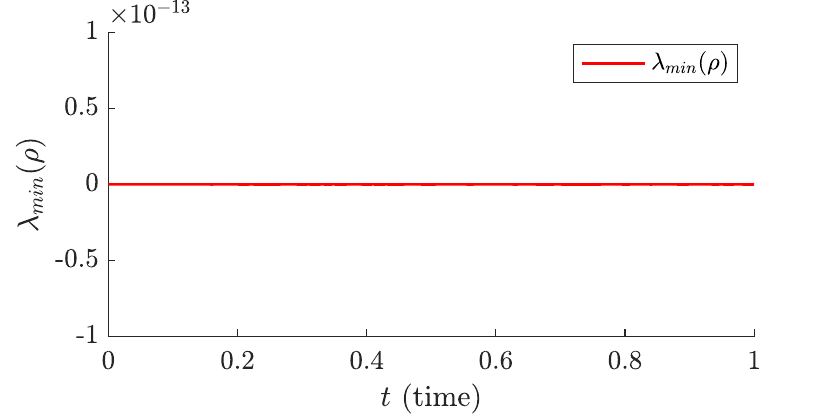} \\
    \includegraphics[width=0.4\linewidth]{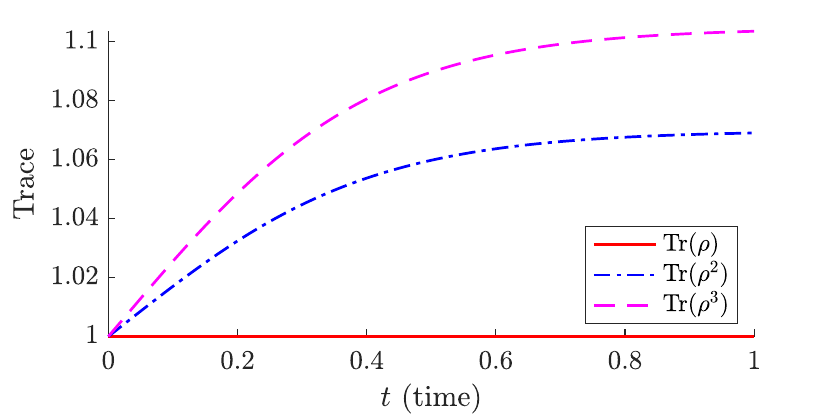}
    \includegraphics[width=0.4\linewidth]{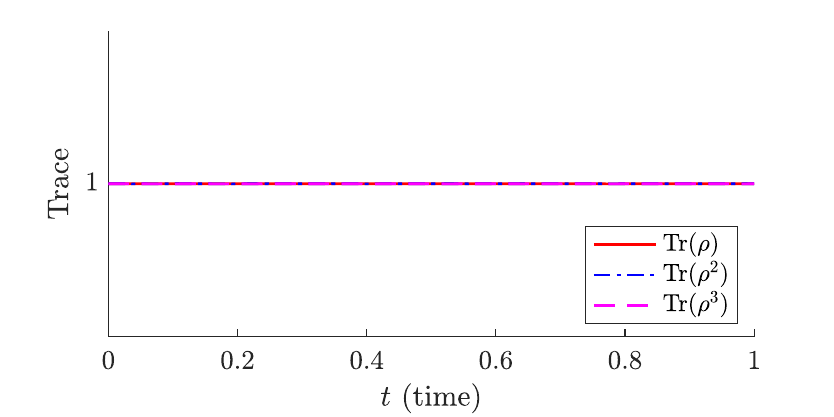}
    \caption{Minimum eigenvalues (upper panels) and trace values (lower panels) of numerical 2-spin density matrices in time interval $[0,1]$ using the standard Euler's method (left panels) and the conservative Euler's method (right panels).}
    \label{fig:eig_cmp}
\end{figure}

Figure \ref{fig:eig_cmp} shows the minimum eigenvalue and traces of various powers of the numerical solutions $\rho_k$, obtained from both method families. The standard RK methods produce solutions with negative eigenvalues, violating the semi-positive definiteness required for the density matrix. Additionally, they fail to preserve the traces of higher powers of $\rho_k$. However, these deficiencies are resolved by the conservative RK methods, which preserve both non-negativity and trace properties throughout the simulation.

We extend the results on convergence to a 16-spin cluster on a triangular lattice with periodic boundary conditions for the pure-state initial condition $\rho_0=\ket{\text{AF}_{1}}\bra{\text{AF}_{1}}$ in Figure \ref{fig:spin16_orders}. The same values and directions for Hamiltonian parameters as in the 2-spin case are used. Note that here $\rho_0$ is a rank-1 density matrix of dimension $2^{9}\times 2^{9}$ with entries
$\rho_0(i,j) = \delta_{i,171}\,\delta_{171,j}$, where \(\delta_{kl}\) is the Kronecker delta. The exact solution for this case is computed using formula \eqref{rho_exact}.
As we observe from Figure \ref{fig:spin16_orders}, the higher-order standard RK methods fail to attain their theoretical convergence rates, while the conservative methods achieve the expected orders. For brevity, we have omitted the corresponding figures showing that the standard methods also fail to preserve non-negativity and traces.

\begin{figure}[ht!]
    \centering
    \includegraphics[width=0.35\linewidth]{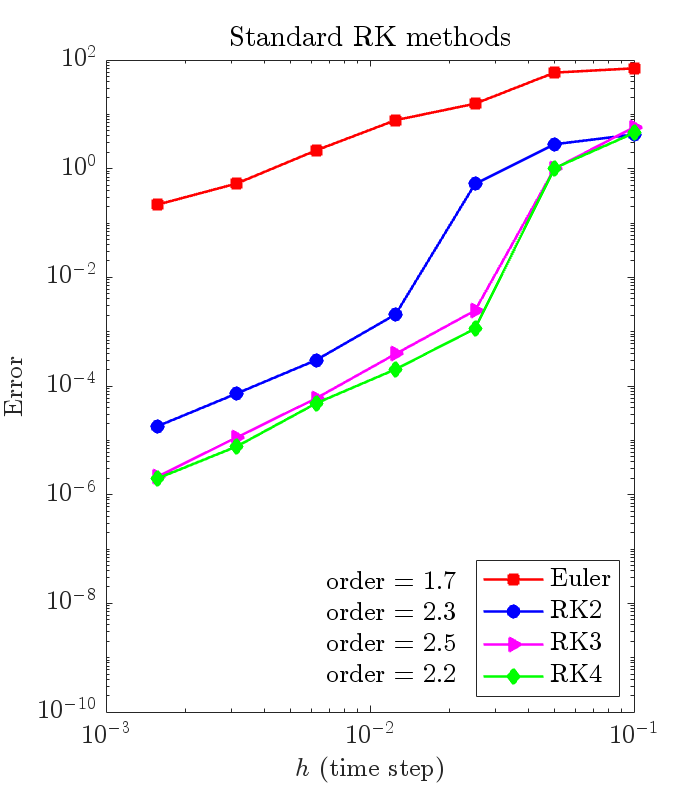}
    \includegraphics[width=0.35\linewidth]{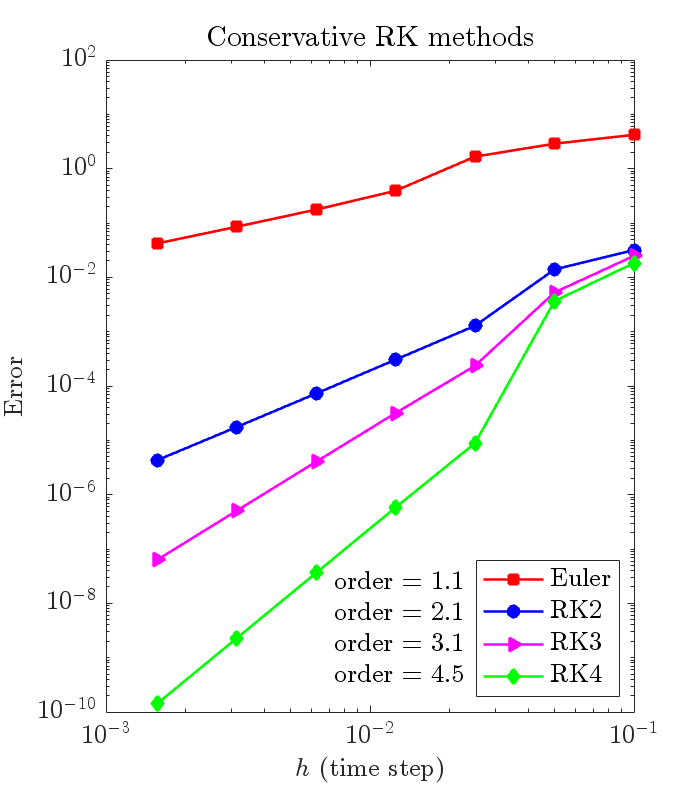} 

    \caption{Error plots and convergence order of standard RK methods (left) and the conservative RK methods (right) for the 16-spin system.} 
    \label{fig:spin16_orders}
\end{figure}

Although the results presented above correspond to specific parameter values and particular directions of the DMI and magnetic field, similar convergence plots have been observed for a variety of parameter settings and directions, which are not shown here. 

Having demonstrated the accuracy of the numerical procedure, we now apply it to investigate the dynamics of key physical properties of the 16-spin system (Figure \ref{fig1}), such as energy, magnetization, and entanglement, for both pure states and more general mixed state initial conditions where exact solutions are not available. 
For energy and magnetization, we examine the expectation values 
$$
\langle A\rangle=\Tr[A\rho(t)]
$$ 
where the observable $A$ is either the energy operator $H$ or the magnetization operators $M_v = \frac{1}{n} \sum_{i=1}^{n} S_i^v$, with $v = x, y, z$ and $S_i^v$ denoting the spin operator at site $i$ in the $v$ direction given in \eqref{eq:spin_op}. For entanglement, we consider the concurrence $C[\rho_{kl}(t)]$ (see the appendix for the definition) \cite{Vidal2002} of the reduced two-spin density matrix $\rho_{kl}(t)=\Tr_{\{\mathbf{S}_i\}_{i=1\&i\ne k,l}^{n}}\rho(t)$, where trace is taken over all degrees of freedom (basis states) of the spins not in $\{\mathbf{S}_k, \mathbf{S}_l\}$, leaving only the reduced state of the $k^{\text{th}}$ and $l^{\text{th}}$ spins \cite{Nielsen2010}. These measurements quantify the entanglement between two spins at the given sites $k$ and $l$. Naturally, different spin pairs yield different entanglement values. However, due to the translational symmetry of the system, the entanglement between any two adjacent spins is the same. Therefore, we restrict our analysis to a single pair of adjacent spins, namely, the spins at sites one and two. All results below are obtained with the conservative RK4 method with time step $h = 0.02$ for the 16-spin system. Unless otherwise specified, we use $|J| = 1~\text{meV}$, in-plane $D_{i,j}$ vectors perpendicular to the bonds between neighboring spins at sites $i$ and $j$ with uniform magnitude $|D_{i,j}| = 0.8~\text{meV}$, and a uniform magnetic field along the $z$-direction with strength $|B| = 1~\text{T}$.

Figure \ref{fig:spin16-ex1} shows the energy dissipation with the initial condition
$\rho_0$ being one of the following antiferromagnetically ordered states $\rho(0)=\ket{\text{AF}_{1}}\bra{\text{AF}_{1}}$, $\rho(0)=\ket{\text{AF}_{2}}\bra{\text{AF}_{2}}$, and $\rho(0)=\frac{1}{2}\ket{\text{AF}_{1}}\bra{\text{AF}_{1}}+\frac{1}{2}\ket{\text{AF}_{2}}\bra{\text{AF}_{2}}$. The plots clearly confirm the dissipative nature of the nonlinear term in the quantum LLG equation \eqref{eq:q-LLG}. Although results are shown here for only a few initial conditions, the energy dissipation under quantum LLG dynamics remains consistent regardless of the initial state.

\begin{figure}[ht!]
    \centering
    \includegraphics[width=0.45\linewidth]{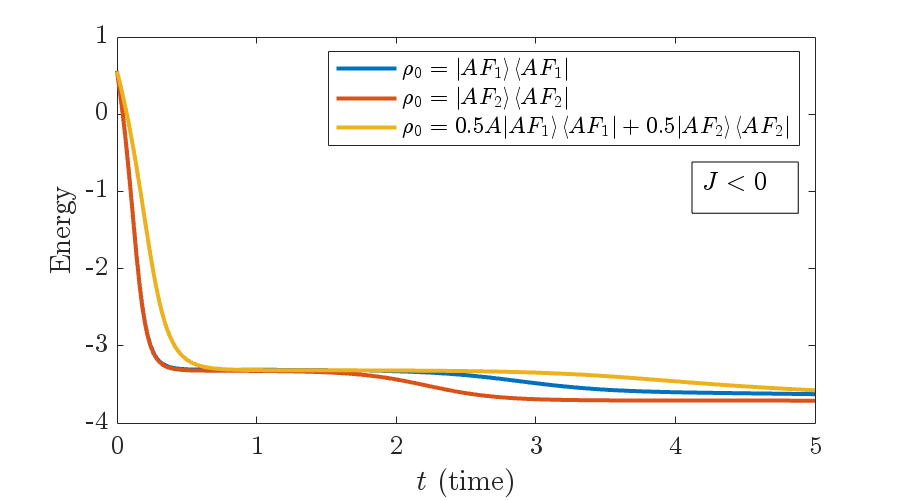}
    \includegraphics[width=0.45\linewidth]{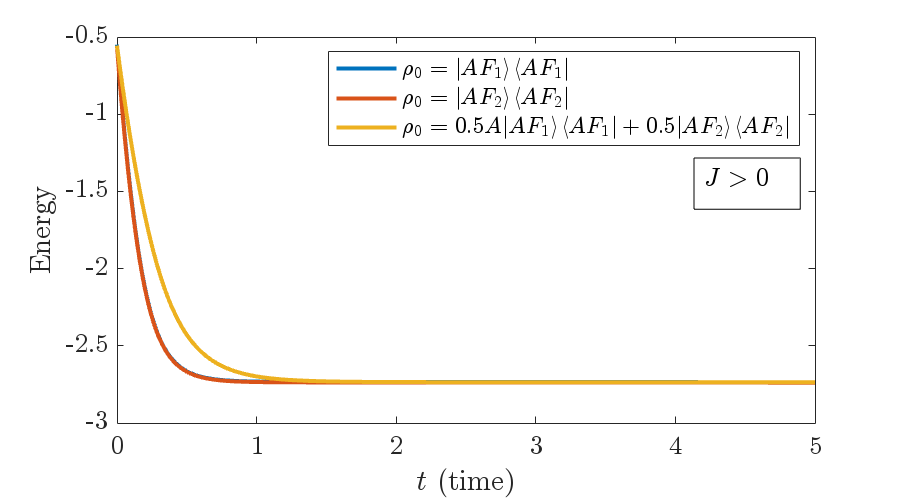}
    \caption{Quantum LLG dynamics of the energy expectation value, $\langle H \rangle = \Tr[H\rho(t)]$, where $H$ is the Hamiltonian of the system, are shown for different initial conditions. The dissipative nature of quantum LLG dynamics is independent of the initial state. Ferromagnetic (antiferromagnetic) Heisenberg exchange coupling is represented by $J<0$ ($J>0$). Energy and time are expressed in millielectronvolts (meV) and picoseconds (ps), respectively.
} 
    \label{fig:spin16-ex1}
\end{figure}

\begin{figure}[ht!]
    \centering
    \includegraphics[width=0.45\linewidth]{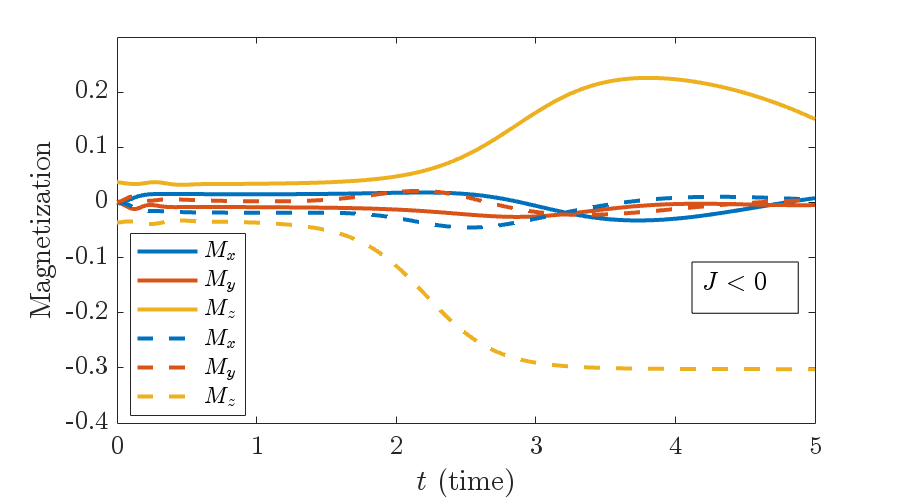}
    \includegraphics[width=0.45\linewidth]{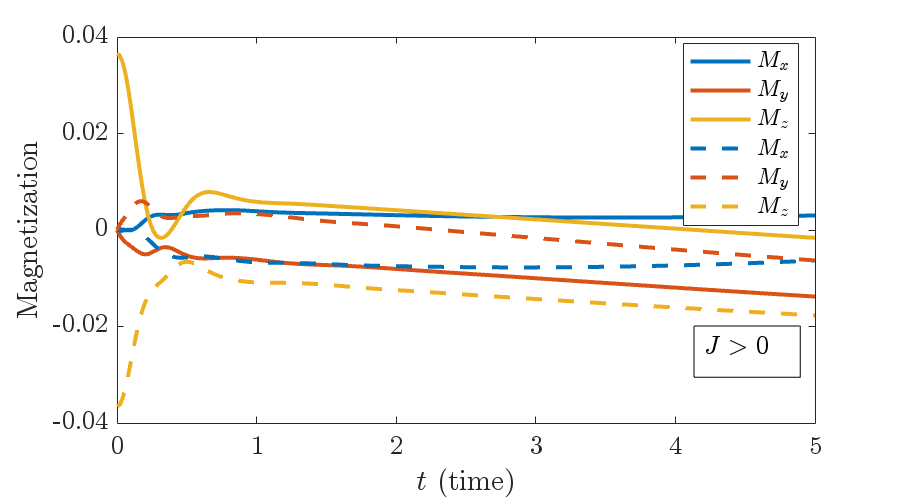} \\  
    \includegraphics[width=0.45\linewidth]{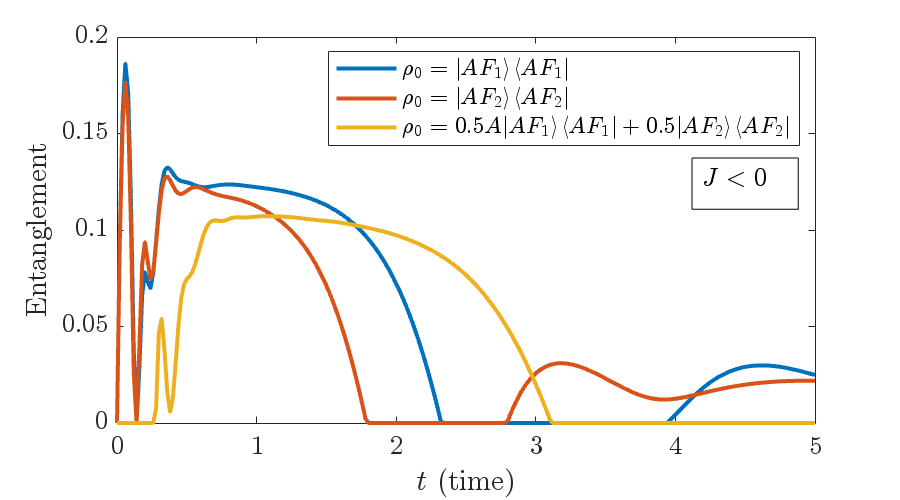}
    \includegraphics[width=0.45\linewidth]{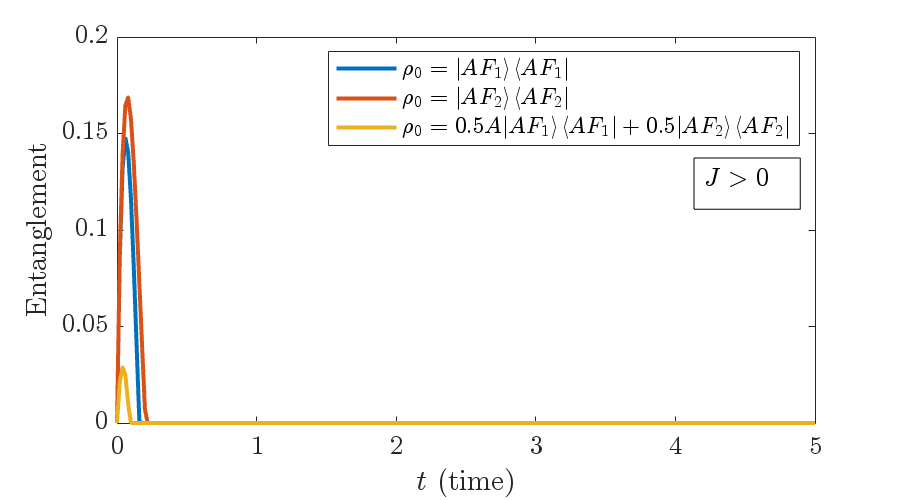} \\
    \caption{Dynamics of magnetization and spin–spin entanglement using the same initial conditions as in Figure \ref{fig:spin16-ex1}. Upper panels: average magnetization components, where solid and dashed curves correspond to the $\rho(0)=\ket{\text{AF}_1}\bra{\text{AF}_1}$ and $\rho(0)=\ket{\text{AF}_2}\bra{\text{AF}_2}$ initial conditions, respectively. Lower panels: concurrence quantifying bipartite entanglement between two nearest-neighbor spins. Ferromagnetic (antiferromagnetic) Heisenberg exchange coupling is represented by $J<0$ ($J>0$). Time is measured in picoseconds (ps).} 
    \label{fig:spin16-ex11}
\end{figure}

\begin{figure}[ht!]
    \centering
    \includegraphics[width=0.45\linewidth]{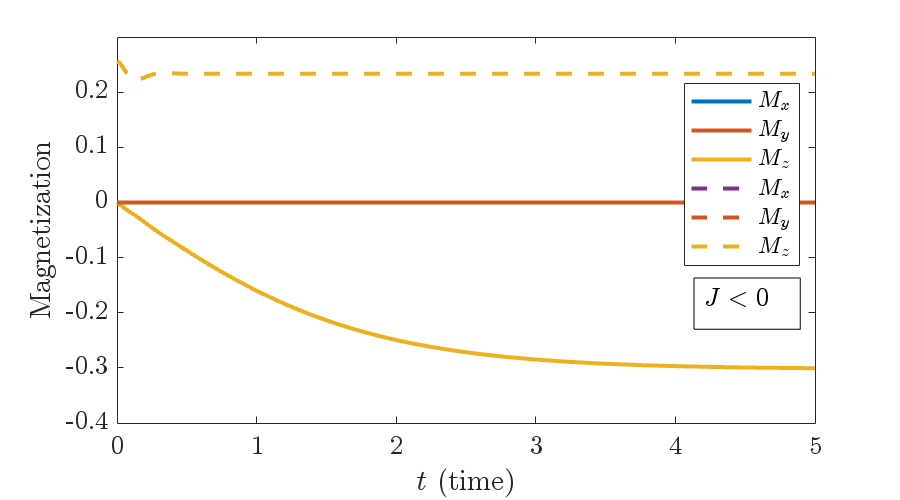}
    \includegraphics[width=0.45\linewidth]{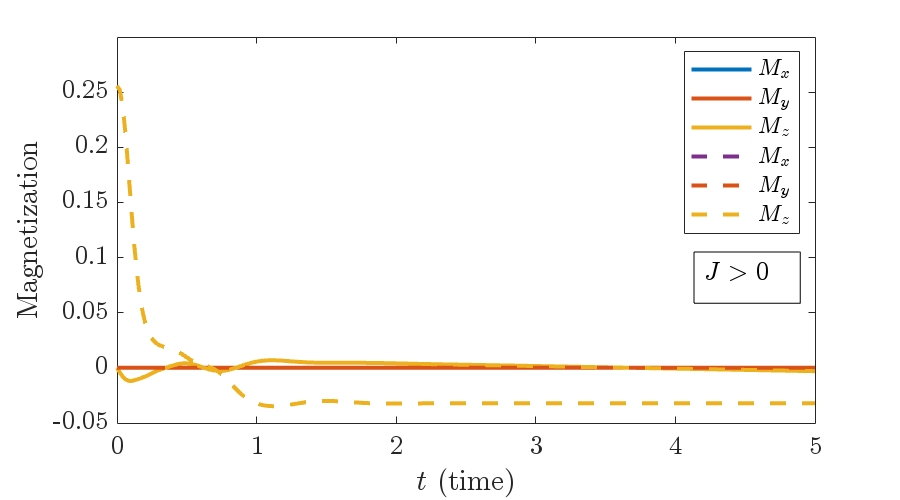} \\  
    \includegraphics[width=0.45\linewidth]{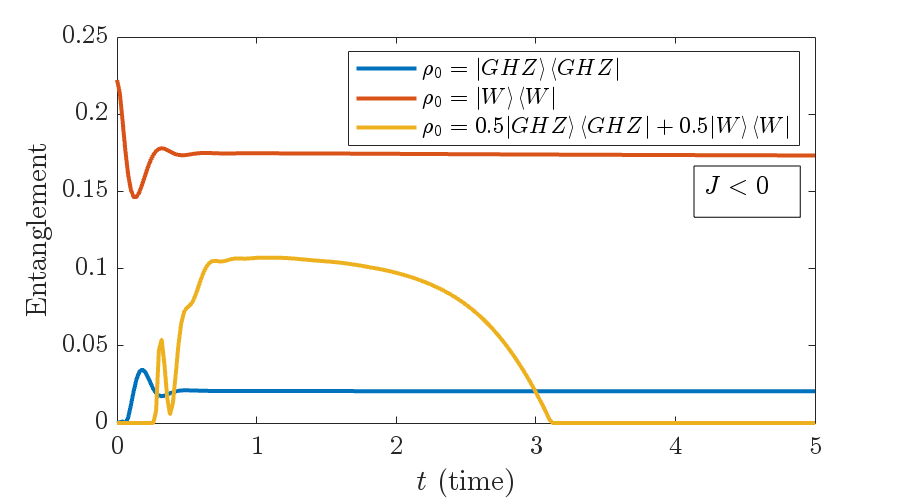}
    \includegraphics[width=0.45\linewidth]{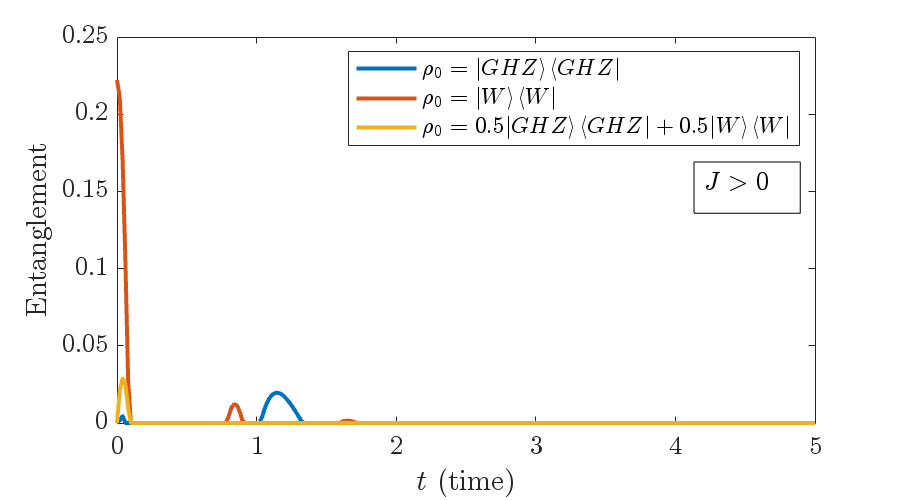} \\
    \caption{Dynamics of magnetization and spin–spin entanglement using $\rho(0)=\ket{\text{GHZ}}\bra{\text{GHZ}}$, $\rho(0)=\ket{W}\bra{W}$, and $\rho(0)=\frac{1}{2}\ket{\text{GHZ}}\bra{\text{GHZ}}+\frac{1}{2}\ket{W}\bra{W}$ as the initial state $\rho_0$. Upper panels: average magnetization components, where solid and dashed curves correspond to the GHZ and $W$ initial conditions, respectively. Lower panels: concurrence quantifying bipartite entanglement between two nearest-neighbor spins. Ferromagnetic (antiferromagnetic) Heisenberg exchange coupling is represented by $J<0$ ($J>0$). Time is measured in picoseconds (ps).} 
    \label{fig:spin16-ex2}
\end{figure}

We also analyze the dynamics of magnetization components and concurrence values for various initial conditions in Figures \ref{fig:spin16-ex11} and \ref{fig:spin16-ex2}. While Figure \ref{fig:spin16-ex11} employs the same initial conditions as Figure \ref{fig:spin16-ex1}, Figure \ref{fig:spin16-ex2} presents results for the GHZ and $W$ states, as well as for a mixed state defined by a convex combination of the two density matrices.

The developed numerical method enables systematic investigation of the effects of various mathematical and physical characteristics, such as intrinsic exchange couplings, external fields, system size, and dimensionality, on quantum spin dynamics governed by the quantum LLG equation. To illustrate some of these effects, we plot the dynamics of bipartite spin–spin entanglement (quantified via concurrence) for different values of spin–orbit DMI strength $|D|$ and magnetic field strength $|B|$ in Figure~\ref{fig:spin16-ex3}. The system is initially prepared in the mixed state
$\rho(0) = \tfrac{1}{2}\ket{\text{AF}_1}\bra{\text{AF}_1} + \tfrac{1}{2}\ket{\text{AF}_2}\bra{\text{AF}_2}$ with no bipartite entanglement between adjacent spins and the dynamics are considered in both the ferromagnetic ($J < 0$) and antiferromagnetic ($J > 0$) cases. Different entanglement dynamics are observed in the two cases. For instance, for a fixed magnetic field of moderate strength ($1\ T$), the entanglement in the antiferromagnetic case rises immediately as the evolution begins but dies out over a short time scale for any value of DMI strength. In contrast, in the ferromagnetic case, although the entanglement is initially zero, it exhibits a sudden rise after some time and persists for a considerable duration when the DMI strength is low. However, the lifetime of entanglement decreases as the DMI strength increases. Moreover, while a stronger magnetic field in the ferromagnetic case shortens the lifetime of non-zero entanglement, the entanglement dynamics in the antiferromagnetic case become more complex, exhibiting fluctuations as the magnetic field strength increases.
Such complex and unconventional entanglement dynamics under the quantum LLG equation suggest further and detailed studies on the physical nature and characteristics of these dynamics.

\begin{figure}[ht!]
    \centering
    \includegraphics[width=0.45\linewidth]{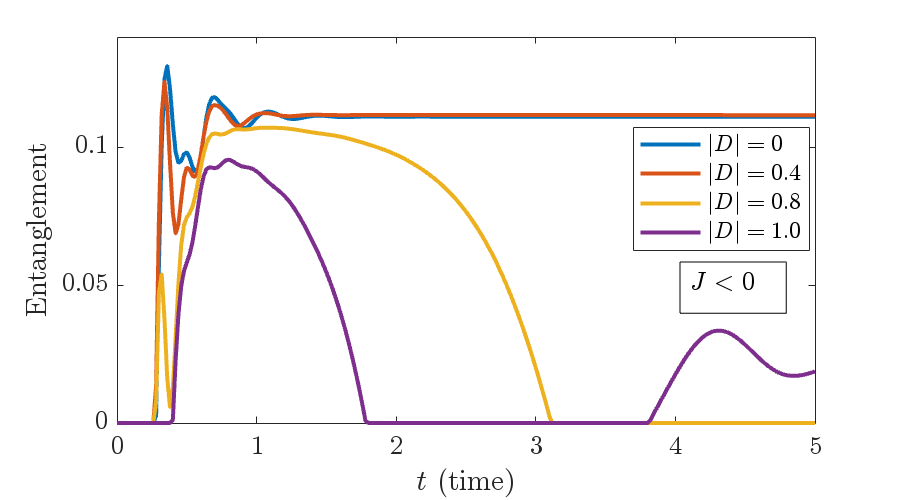}
    \includegraphics[width=0.45\linewidth]{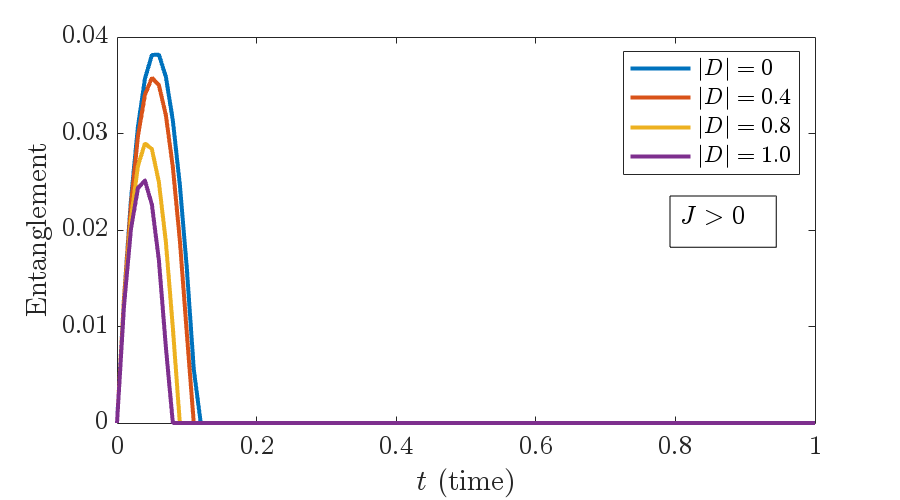}
    \includegraphics[width=0.45\linewidth]{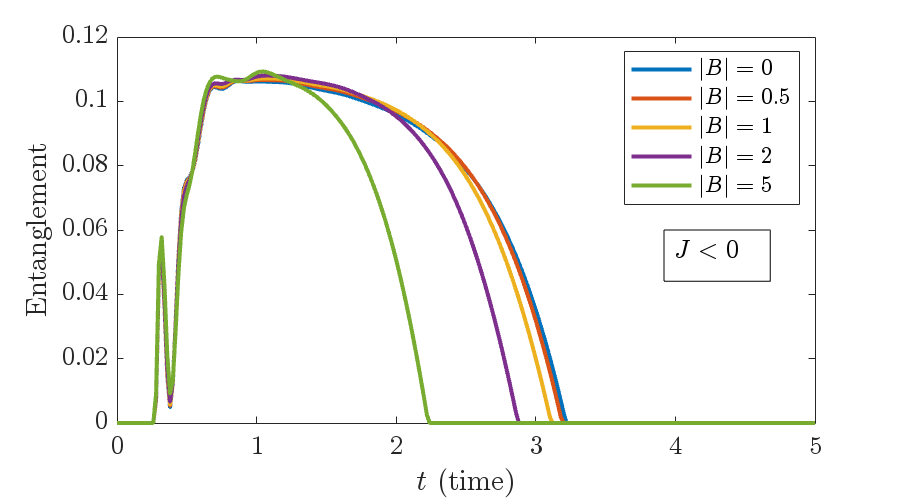}
    \includegraphics[width=0.45\linewidth]{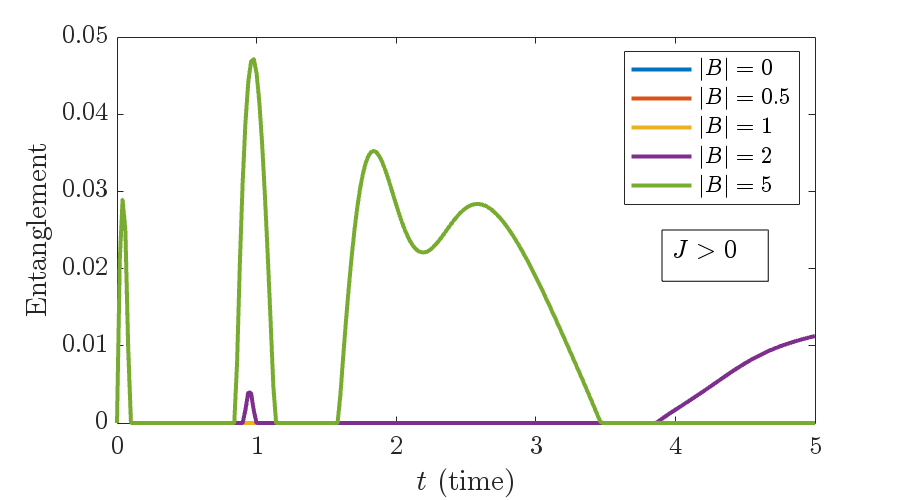}
    \caption{Bipartite spin–spin entanglement (concurrence) dynamics  for various spin–orbit DMI and magnetic field strengths, shown for both ferromagnetic ($J < 0$, left panels) and antiferromagnetic ($J > 0$, right panels) regimes. The upper panels show the effect of different DMI strengths on entanglement dynamics when the magnetic field is fixed at $|B| = 1$ T. Conversely, the lower panels illustrate how entanglement dynamics vary with different magnetic field strengths, while the DMI interaction is fixed at $|D| = 0.8$ meV. The initial condition is taken as $\rho(0) = \tfrac{1}{2}\ket{\text{AF}_1}\bra{\text{AF}_1} + \tfrac{1}{2}\ket{\text{AF}_2}\bra{\text{AF}_2}$. Time is measured in picoseconds (ps).}
    \label{fig:spin16-ex3}
\end{figure}

We also investigate the effect of system size on spin–spin entanglement dynamics by comparing the results of the 2-spin and 16-spin systems using the initial state $\rho(0)=\frac{1}{2}I/2^n+\frac{1}{2}\ket{W}\bra{W}$, as illustrated in Figure \ref{fig:spin16-ex4}. For the 2-spin case ($n=2$), the corresponding initial state is entangled, and the entanglement oscillates during the evolution, eventually retaining and stabilizing at its initial value. In contrast, for the 16-spin system ($n=9$ with periodic boundary condition), the initially zero bipartite entanglement remains unchanged throughout the time evolution under the quantum LLG dynamics. This observation is qualitatively the same in both the ferromagnetic and antiferromagnetic cases, indicating that system size plays a significant role in the emergence of entanglement dynamics.
The vanishing bipartite entanglement in the many-body setting suggests that entanglement may be distributed more globally across the system, underscoring the complexity of entanglement dynamics in larger quantum systems. This complexity is particularly interesting and motivates further theoretical and numerical investigation to fully understand the physical origin of entanglement behavior and its redistribution as the number of particles increases.

\begin{figure}[ht!]
    \centering
    \includegraphics[width=0.45\linewidth]{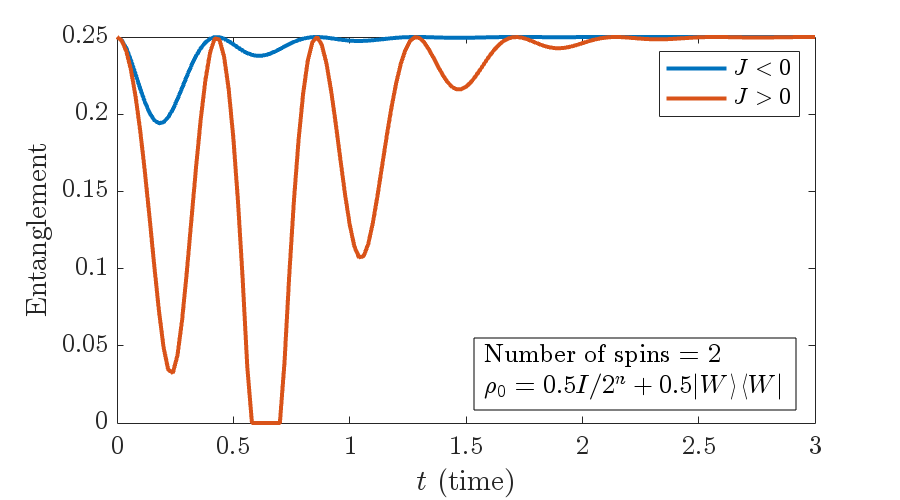}
    \includegraphics[width=0.45\linewidth]{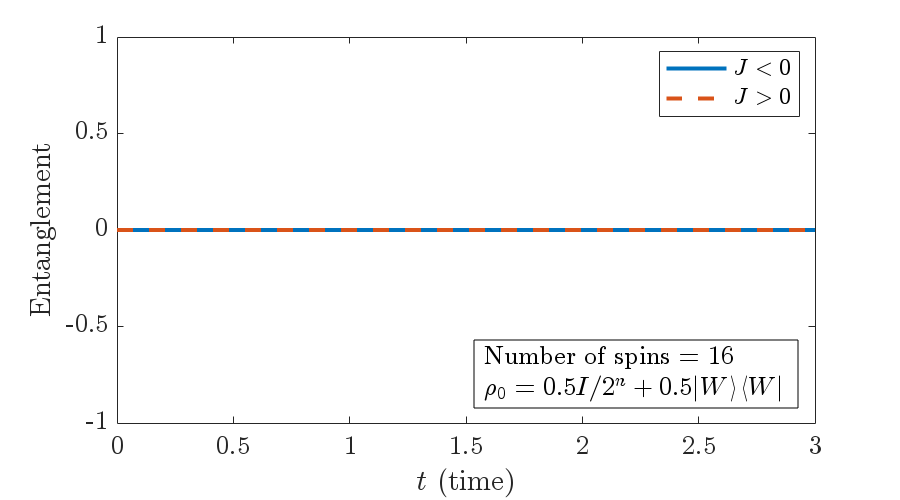}
    \caption{Bipartite spin–spin entanglement (concurrence) dynamics, plotted for both the 2-spin and 16-spin quantum systems. The magnetic field $B$ and the DMI vectors $D_{ij}$ are aligned along the $z$-axis, with fixed strengths $|D| = 0.8$ meV and $|B| = 1$ T. The sign of the Heisenberg exchange coupling $J$ distinguishes between ferromagnetic ($J < 0$) and antiferromagnetic ($J > 0$) spin systems. Time is measured in picoseconds (ps).}
    \label{fig:spin16-ex4}
\end{figure}

\section{Summary}
We have developed and demonstrated a numerical framework tailored to the recently proposed quantum generalization of the Landau–Lifshitz–Gilbert (LLG) equation which enables the simulation of quantum spin dynamics in many-body spin systems. Our methodology addresses the intrinsic challenges of quantum many-body dynamics by preserving essential physical and mathematical properties of the quantum state, including Hermiticity, trace preservation, and non-negativity. By applying the method to a class of quantum spin systems, we have observed unconventional and rich quantum phenomena, including the formation of long-lived entangled states and the emergence of spin correlations. These results highlight the capability of the quantum LLG equation to capture quantum behavior that lies beyond the reach of classical spin dynamics. 
This work contributes to a significant step toward scalable, structure-preserving simulations of quantum magnetism. It opens up new possibilities for investigating emergent quantum effects in complex magnetic systems and for exploring quantum technologies based on spin phenomena. 

\section*{Acknowledgments}
VAM gratefully acknowledges the insightful discussions with Prof. Olle Eriksson and his generous support through research funding.

\appendix
\section{Appendix}
Let $\rho \in \mathbb{C}^{4 \times 4}$ be the density matrix of a two-qubit (spin-$\frac{1}{2}$) system. The concurrence $C(\rho)$ is defined as \cite{Wootters1998}
\begin{equation}
C(\rho) = \max \left\{ 0, \lambda_1 - \lambda_2 - \lambda_3 - \lambda_4 \right\},
\end{equation}
where $\lambda_1, \lambda_2, \lambda_3, \lambda_4$ are the square roots of the eigenvalues (in decreasing order) of the matrix
\[
R = \rho \, \tilde{\rho}
\]
and $\tilde{\rho}$ is the spin-flipped state, given by
\[
\tilde{\rho} = (\sigma_y \otimes \sigma_y) \, \rho^* \, (\sigma_y \otimes \sigma_y). 
\]
Here $\rho^*$ is the complex conjugate of $\rho$ in the computational basis, and $\sigma_y$ is the Pauli-Y matrix.

\bibliographystyle{plainnat}
\bibliography{dmref}

\end{document}